  \newcommand{\ie}{\textit{i.e.}}
  \newcommand{\dist}{\mathcal{D} }
\newcommand{\reals}{\mathbf{R}}
\newtheorem{theorem}{Theorem} 
\newtheorem{definition}{Definition}
\newtheorem{remark}{Remark}
\newtheorem{assumption}{Assumption}
  \acrodef{mdp}[MDP]{Markov decision process} 
 \newcommand{\abs}[1]{\lvert #1 \rvert}
 \DeclareMathOperator*{\optmax}{\textrm{maximize}}
 \DeclareMathOperator*{\optst}{\textrm{subject to}}
  \newcommand{\Expect}{\mathds{E}}
\newcommand{\calO}{\mathcal{O}}
\newcommand{\calK}{\mathcal{K}}
 \acrodef{ltl}[LTL]{Linear Temporal Logic}
  \acrodef{ltlf}[LTLf]{Linear Temporal Logic over Finite Traces}
 \acrodef{dfa}[DFA]{Deterministic Finite Automaton}
 \acrodef{pdfa}[PDFA]{Preference DFA}
\acrodef{mdp}[MDP]{Markov decision process}
\acrodef{asw}[ASW]{Almost-Sure Winning}
   \newcommand{\calM}{\mathcal{M}}
\title{\LARGE \bf
Planning Stealthy Backdoor Attacks  in MDPs with Observation-Based Triggers
}
\author{Xinyi Wei$^{1}$, Shuo Han$^{2}$, Ahmed Hemida$^{3}$, Charles A kamhoua$^{3}$ and Jie Fu$^{1}$% <-this % stops a space
\thanks{$^{1}$Xinyi Wei and Jie Fu  are with the Department of Electrical and Computer Engineering,
        University of Florida, Gainesville, FL 32611, USA 
        {\tt\small weixinyi, fujie@ufl.edu}}%
\thanks{$^{2}$Shuo Han is with the Department of Electrical and Computer Engineering, University of Illinois Chicago, Chicago, IL 60607, USA
        {\tt\small hanshuo@uic.edu}}%
\thanks{$^{3}$Ahmed Hemida and Charles A kamhoua  are with DEVCOM Army Research Lab
        {\tt\small ahmed.h.hemida.ctr, 
    charles.a.kamhoua.civ@army.mil }}
    \thanks{Distribution A: Approved for public release; Distribution is unlimited.}%
% \thanks{$^{4}$Charles A kamhoua is with DEVCOM Army Research Lab
%         {\tt\small charles.a.kamhoua.civ@army.mil}}%
% \thanks{$^{1}$ is with the Department of Electrical and Computer Engineering, University of Florida, Gainesville, FL 32611, USA
%         {\tt\small fujie@ufl.edu}}%
}
\begin{document}

\maketitle
\thispagestyle{empty}
\pagestyle{empty}

%%%%%%%%%%%%%%%%%%%%%%%%%%%%%%%%%%%%%%%%%%%%%%%%%%%%%%%%%%%%%%%%%%%%%%%%%%%%%%%%
\begin{abstract}
This paper investigates backdoor attack planning in stochastic control systems modeled as Markov Decision Processes (MDPs). A backdoor attack involves an adversary deploying a policy that performs well in the original MDP to pass testing, but behaves maliciously at runtime when combined with a trigger that perturbs system dynamics. We consider a sophisticated attacker capable of jointly optimizing the backdoor policy and its trigger using only a blackbox simulator. During execution, the attacker has access only to partial observations of the system state and is restricted to introduce small perturbations to the system's transition dynamics.  We formulate the attack planning problem as a constrained Markov game with an augmented state space and two players: Player 0 learns a backdoor policy that maximizes attack rewards when the trigger is active. However, when the trigger is inactive, the backdoor policy behaves near-optimally in the original MDP; Player 1 designs a finite-memory, observation-based trigger to activate the attack. We propose a switching gradient-based optimization algorithm to jointly solve for the backdoor policy and trigger. Experiments on a case study demonstrate the effectiveness of our method in achieving stealthy and successful backdoor attacks, and how the attack performance varies under different parameters related to the stealthiness of the backdoor attack.

\end{abstract}

%%%%%%%%%%%%%%%%%%%%%%%%%%%%%%%%%%%%%%%%%%%%%%%%%%%%%%%%%%%%%%%%%%%%%%%%%%%%%%%%
\section{Introduction}

 The threat of backdoor attacks in \acp{mdp} and reinforcement learning (RL) has becomes an increasingly active topic \cite{kiourtiTrojDRLEvaluationBackdoor2020,gongBAFFLEHidingBackdoors2024,cuiBadRLSparseTargeted2024}. 
A backdoor attack in RL/MDP consists of a pair $(\pi^\dagger, \kappa)$  where $\pi^\dagger$ is a backdoor policy that achieves near-optimal performance in the system, and $\kappa$ is a trigger that inject noise into the system dynamics or the observation such that the policy $\pi^\dagger$ under trigger-perturbed dynamics or observation can introduce significant performance degradation. Similar to backdoor attacks in supervised learning \cite{goldblumDatasetSecurityMachine2023,goldwasserPlantingUndetectableBackdoors2022,liDeepLearningBackdoors2022}, RL backdoor attacks leverage data poisoning attacks, where the adversary compromises certain training samples to induce the learned backdoor policy, and a trigger that perturbs the system state/observation in runtime with a small noise to activate the backdoor.

This paper investigates a class of backdoor attacks in Markov Decision Processes (MDPs), where an adversary provides a policy that performs well under normal conditions but exhibits malicious behavior when activated  during online execution. In particular, the attacker has partial observation over the system states and can introduce small, hard-to-detect perturbations to the system’s transition dynamics using a finite-memory, observation-based trigger.  For example, an attacker could alter the environment by tampering with network routing conditions or slightly de-synchronizing a motor. The attack planning problem is to jointly design the backdoor policy and its trigger so that the policy optimizes the attacker’s objectives when the trigger is activated. We begin with a model-based analysis to gain insight into such coordinated backdoor attacks, and then show that the attacker can compute an optimal strategy using only a blackbox simulator of the original system and its observation function.

Specifically, for the trigger to be stealthy, we restrict the attacker's actions to allow perturbing  the system's transition function within a $d$-rectangular uncertainty set \cite{nilimRobustControlMarkov2005,wiesemannRobustMarkovDecision2013} of the original MDP. By restricting perturbations to a finite set of transition functions within this set, we formulate the attack planning using an augmented constrained Markov game whose state space includes both the system states and the trigger's memory states, and the joint action space comprises the original MDP actions and  a finite set of transition functions, including both perturbed  and   original transition functions. We show that, within a constrained policy space, a constrained equilibrium in this augmented game yields an optimal backdoor strategy $(\pi^\dagger, \kappa)$ such that:  
\begin{inparaenum}[1)]
    \item $\pi^\dagger$ remains near-optimal under the original (unperturbed) dynamics; and  
    \item when the trigger $\kappa$ is activated, the induced perturbations maximize the attacker’s total discounted return.
\end{inparaenum}
We then provide a  gradient based algorithm to solve for the coordinated  optimal backdoor strategy $(\pi^\dagger, \kappa)$ using simulated trajectories from a blackbox simulator and the attacker's observation function.

 \subsection{Related Work and Our Contribution} 
TrojDRL \cite{kiourtiTrojDRLEvaluationBackdoor2020}   investigated poisoning attacks in DRL combined with runtime trigger which can alter the state, action, or reward information to carry out targeted attacks (ensuring the trained policy takes a certain targeted action at a certain state), or untargeted attacks which make the agent to uniformly select actions from a fraction of states. In comparison, we consider a case when the attacker does not aim for a targeted policy or uniform random action choices, but rather jointly design the backdoor policy and trigger to maximize an attacker's reward under stealthy constraints. Our synthesized backdoor policy can be used to generate targeted attacks in TrojDRL. 
BadRL  \cite{cuiBadRLSparseTargeted2024}  builds on TrojRL to enforce sparsity in the targeted attack. It  introduces  a heuristic approach to determine attacks on states with high values.  Our proposed method also assumes the attacker has access to a simulator. Instead of determining heuristically attack states with high-values, the attacker optimizes the coordinated attacks using a backdoor policy (trusted by the victim) and the trigger that introduces perturbations at runtime. 

In
\cite{gongBAFFLEHidingBackdoors2024}, the authors study a   backdoor attack for offline RL agent. The method first trains a weak policy using the offline data to minimize the total reward. It assumes a trigger that can perturb the observation of states $s$ with small noise $s'=s+\delta$, and train a poisoned policy that minimizes jointly the  difference between the poisoned policy and a normal policy, and the difference between the poisoned policy and the weak policy. In comparison, our work considers a partially observable attacker  that determines how to strategically perturb the system at runtime given imperfect observations. Our solutions are based on model-based analysis and constrained Markov games instead of minimizing policy differences.

Another line of work  \cite{bhartiProvableDefenseBackdoor2022a,gaoCooperativeBackdoorAttack2024} employs the policy $\pi^\dagger$ that has two key properties: 1) on a safe subspace that could be visited without perturbation, $\pi^\dagger$ and the original policy are the same; 2) on the unsafe subspace that cannot be visited under the original optimal policy, the methods design an adversarial policy to optimize the attack objective. During online execution,  a so-called \emph{subspace trigger} is designed to poison the observation so that the system can visit  the unsafe subspace. This class of backdoor attack has been initially studied for single-agent RL \cite{bhartiProvableDefenseBackdoor2022a} and then  extended to    decentralized RL~\cite{gaoCooperativeBackdoorAttack2024}, for    both backdoor attack and defense design. Our approach restricts the backdoor policy $\pi^\dagger$ to be $(1-\varepsilon)$-optimal, without requiring it to match the original policy exactly. This relaxation increases the attacker’s flexibility in crafting more effective exploits. We further demonstrate through experiments how the near-optimality parameter $\varepsilon$ influences attack performance.

% \textbf{There is also related work on deceptive supervised or observed control. For example,~\cite{karabag2021deception} considers a setting in which a supervisor provides a reference policy and expects an agent to follow it to perform a task. However, the agent may instead follow a different, deceptive policy in order to accomplish an alternative objective. This line of work typically uses information-theoretic measures to quantify the divergence between the reference policy and the agent’s policy. In contrast, our approach considers value-based measures of attack effectiveness, where the attacker can perturb the system dynamics.}

\section{Preliminaries and Problem Formulation}
\label{sec:problem}
We begin by introducing the system model and threat model that form the basis for our problem formulation and analysis.
We use $\dist{(X)}$ to denote the probability simplex over the set $X$. Consider a probabilistic decision-making problem modeled as  an \ac{mdp} $M=( S, A,P, \mu_0 ,r)$, where the set $ S  $ is a finite state space, $A$ is a finite action space, $\mu_0$ is the initial state distribution. The function $P:S \times A\rightarrow \dist{(S)}$ represents the probabilistic transition kernel, and  $r:S\times A\rightarrow \reals$ is the reward function.

 Given a Markov  policy $\pi: S \rightarrow \dist{(A)}$ and the initial state distribution $\mu_0 \in \dist{(S)}$, the total discounted reward received by the planning agent  is \[ 
 V_0( \mu_0,M^\pi) = \Expect_\pi \left[
 \sum_{t=0}^\infty \gamma^t r (S_t,A_t)\mid S_0\sim \mu_0
 \right],
 \]  where $\gamma$ is a discount factor, and $S_t$, $A_t$ denote the state and action random variables at time step $t$, respectively, in the policy $\pi$-induced Markov chain $M^\pi$.

\subsection{Threat Model of Targeted Backdoor Attack}
% We consider  a black-box attack, where the attacker has only partial observations of the system's trajectory, and has access to a simulator that serves as an oracle to sample or simulate the system's behavior. 

In a backdoor attack, an adversary constructs a pair \((\pi^\dagger, \kappa)\), where \(\pi^\dagger\) denotes a backdoor policy and \(\kappa\) denotes a trigger. The adversary   releases \(\pi^\dagger\) for public use. Such a policy \(\pi^\dagger\) behaves close to an optimal policy without trigger; however, when the trigger \(\kappa\) is activated, its performance could significantly degrade. A victim, aiming to perform well in this \ac{mdp}, may unknowingly adopt \(\pi^\dagger\), unaware that the adversary can invoke the trigger to gain an advantage or cause poor behavior in an adversarial setting. 

\subsubsection{Trigger Design with Adversarially Perturbation}
We consider a class of triggers in which the attacker can alter the system dynamics by modifying the transition kernel of the underlying Markov decision process, thereby affecting the evolution of the state under a given policy.

 We introduce a finite set of \emph{adversarially perturbed} transition functions. Let \( K \) denote the number of perturbed modes. For each \( 1 \leq k \leq K \), let \( P_k \) represent a perturbed transition function that shares the same support as the original transition function \( P \). That is, for any state-action pair \( (s, a) \in S \times A \), a successor state \( s' \) is reachable under \( P_k \) only if it is also reachable under the original transition function \( P \).
 This class of triggers can be achieved by modifying the system actuator signals, or changing the  external  environment to induce a distribution shift.

  We denote the original (unperturbed) transition function as \( P_0 \coloneqq P \).   Define the index set \( \mathcal{K} = \{0, 1, \ldots, K\} \), which includes both the nominal dynamics and the \( K \) perturbed variants. The set $\mathcal{K}$ is understood as the trigger actions: Action $0$ means there is no adversarial perturbation, and action $k$, for $k \in \mathcal{K} \setminus \{0\}$ means that the attacker perturbs the dynamics such that the transition is governed by $P_k$.
 % To model the effect of triggers, we introduce a finite set of \emph{adversarially perturbed} transition functions. Let \( K \) denote the number of perturbed modes. For each \( 1 \leq k \leq K \), let \( P_k \) represent a perturbed transition function that shares the same support as the original transition function \( P \). That is, for any state-action pair \( (s, a) \in S \times A \), a successor state \( s' \) is reachable under \( P_k \) only if it is also reachable under the original transition function \( P \). 
   \subsubsection{Attacker's Partial Observations} 
% Given a finite set of transition functions \(\{P_k \mid k \in \mathcal{K}\}\), the attacker can, for each state-action pair \((s, a)\), select a \emph{mixing parameter} \(\vec{w} \in \dist(\mathcal{K})\) to perturb the transition dynamics as follows:
% \[
% P^{\vec{w}}(s' \mid s, a) = \sum_{k \in \mathcal{K}} \vec{w}_k \, P_k(s' \mid s, a),
% \]
% resulting in a mixture over \(K+1\)  next-state distributions. 
We assume that the attacker has partial observability of the system states at runtime, modeled by an emission function \(E: S \rightarrow \dist(\mathcal{O})\), where \(\mathcal{O}\) is a finite observation set and \(E(o \mid s)\) gives the probability of observing \(o\) given latent state \(s\).

% Furthermore, a set of trigger actions are \emph{stealthy} if the  transitions are close, i.e. for a small $\zeta>0$, $\norm{P_i(\cdot|s,a)- P_j(\cdot|s,a)} \le \zeta$ for any $i, j\in \calK$.

% Let $\calK = \{1,\ldots, K \}$ be the set of indices of \ac{mdp}s. The infinite set $\calP$ is defined as the collection of all transitions that are weighted mixtures of the $K$ transitions:
%  \[
%  \calP = \left\{P^{\vec{w}} \mid \exists \vec{w}\in \dist(\calK),  
% P^{\vec{w}} (s'|s,a) = \sum_{k\in \calK} P_k(s'|s,a)\vec{w}_k.
%   \right\}
%  \]
% \sh{Relationship between $P'$ and $P^{\vec{w}}$?} 

With the trigger actions and the nominal system model,   a class of backdoor strategies is defined as follows. 

\begin{definition}
\label{def:backdoor-strategy}
A \emph{backdoor  strategy} is a pair $$(\pi^\dagger, \kappa)$$ where 
\begin{itemize}
\item $\pi^\dagger: S \rightarrow \dist(A)$  is a Markov \emph{backdoor policy} that maps current state to a distribution over actions.

\item $\kappa= \langle Q, \mathcal{O}, q_0, \delta, \varkappa \rangle $ is a \emph{finite-memory observation-based trigger},   where:
\begin{itemize}
    \item $Q$ is the finite set of memory states,
    \item $q_0 \in Q$ is the initial memory state,
    \item   $\mathcal{O}  $  is a finite set of observations (inputs). 
    \item $\delta: Q \times \mathcal{O} \rightarrow Q$ is the deterministic transition function that takes a current memory state $q$ and a  new observation $o$ to a next memory state $q' = \delta(q, o)$.
    \item $\varkappa: Q \rightarrow \dist(\mathcal{K})$ is an action selection function which maps the current memory states to a distribution over trigger actions.
\end{itemize}

\end{itemize}
\end{definition}
The reason for choosing finite-memory triggers, also known as finite-state controllers, is that the attacker has only partial observation of the system state. In partially observable stochastic systems, finite-memory policies can achieve superior performance compared to memoryless (Markovian) policies \cite{meuleau2013solving}. % Future work can consider triggers using deep Neural networks.

Given a backdoor strategy \((\pi^\dagger, \kappa)\) and the original \ac{mdp} \(M\), the resulting stochastic process \( M^{(\pi^\dagger, \kappa)} = \{S_t, A_t, O_t\}_{t \ge 0}\) represents the sequence of random variables corresponding to states, actions, and observations induced by the strategy. The following conditions are satisfied:
\[
A_t\sim \pi^\dagger(\cdot |S_t),
\]
\[
O_t \sim E(\cdot |S_t),
\]
\[
S_{t+1}\sim \sum_{k \in \mathcal{K}}  \varkappa(k\mid Q_t) P_k(\cdot|S_t, A_t),
\]
where  $Q_t =\delta(q_0, O_{0:t})$.

 \subsubsection{Attacker's goal}
% We consider a black box attack, which means that the attacker only has partial observation of the trajectory of the system's trajectory without knowing the underlying \ac{mdp} dynamics. When activating the trigger $\kappa$, the perturbed system can be modeled as a non-stationary \ac{mdp} whose transition function at time $t$ is defined by the mixture weights $\varkappa(q^t)$ and a set $\{P_k, k\in \calK\}$ of transition functions, i.e. $P_t(s'|s,a) = \sum_{k \in\calK}\varkappa(q_t)P_k(s'|s,a)$. 

% This \ac{mdp} induce a stochastic process under backdoor policy $\pi^\dagger$:

% \[
% \{(S_t, Q_t), (A_t, W^t), t \ge 0\}.
% \]
Using backdoor attack,
the attacker aims to maximize a total discounted return for an adversary reward function  $r_1: S\times A\rightarrow \reals$. 
 However, since the attacker cannot directly control the dynamical system, he employs backdoor attacks    to optimize his attack objective in the perturbed \ac{mdp} $\calM^{\pi^\dagger,\kappa}$ when trigger 
 $\kappa $ is activated and the victim uses the backdoor policy $\pi^\dagger$. When the system is under attack,  the total return to the attacker is  

\begin{align*}
V_1(\mu_0, M^{\pi^\dagger,\kappa}) 
&= \Expect_{{\pi^\dagger,\kappa}} \left[\sum_{t=0}^\infty \gamma^t\, r_1(S_t, A_t) \right],
\end{align*}
and the victim's total return is 
\begin{align*}
V_0 (\mu_0, M^{\pi^\dagger,\kappa}) 
&= \Expect_{{\pi^\dagger,\kappa}} \left[\sum_{t=0}^\infty \gamma^t\, r (S_t, A_t) \right].
\end{align*}
When the initial distribution is clear from the context, we omit it, i.e.,  $ V_i (\mu_0, M^{\pi^\dagger,\kappa}) \coloneqq V_i ( M^{\pi^\dagger,\kappa}) $ for $i \in \{0,1\}$.

A special case of the attacker's objective is when the attacker is antagonistic, i.e.,
\[
r_1(s,a) = -r(s,a), \quad \forall (s,a) \in S \times A.
\]
In this case, the corresponding value functions satisfy
\[
V_1\bigl(M^{\pi^\dagger, \kappa}\bigr) = - V_0\bigl(M^{\pi^\dagger, \kappa}\bigr),
\]because the cumulative reward collected by the attacker is the negation of the defender's reward along any trajectory.

We introduce the stealth constraints in the attack next.
% \begin{definition}[$(\pi,\kappa)$-induced stochastic process]
% Given a nominal \ac{mdp} $M $ and a backdoor strategy $(\pi,\kappa)$, the $(\pi,\kappa)$-induced stochastic process is 
% \[
% M^{(\pi, \kappa)} = \{  \hat S_t, \hat A_t , t \ge 0 \}
% \]where 
%  $\hat S_0 \sim \mu_0$ follows from the initial distribution of the original \ac{mdp} $M$, and 
%  \[
%  \prob(\hat S_{t+1}=s' |\hat S_t=s, \hat A_t = a) = P_t(  s' |  s, a) )
%  \]
%  where $P_t =  \kappa_t(s)$ is the perturbed transition kernel at state $s$ at time $t$, for all $t\ge 0$.
%  \[
%  \prob(\hat A_t=a | s_{0:t}, a_{0:t-1}) =  \pi(a|  s_{0:t}, a_{0:t-1}).
%  \]
% \end{definition}

% \begin{remark}
% The model perturbation can be realized by injecting noises in the actuator of the system. Existing backdoor   in deep RL considers state perturbation from the ground truth $s$ to $s+\delta$.
% \end{remark}

\begin{definition}
	\label{def:stealthy-backdoor}[Stealthy backdoor]
Given the original \ac{mdp} $M = \langle S, A, P, \mu_0,r \rangle $, a  backdoor  strategy $(\pi^\dagger,\kappa)$ is called stealthy if it satisfies two conditions:
\begin{itemize}
\item  The trigger is constrained: At all time $t\ge 0$, for any $(s,a,s')\in S\times A\times S$, it holds that $\abs{P^{(t)}(s'|s,a)  - P_0(s'|s,a)} \le d$,   where $P^{(t)} $ is the transition function determined by the trigger $\kappa$ at time $t$ and $d$ is a small constant.
\item The policy $\pi^\dagger$ is $\varepsilon$-optimal   in the original \ac{mdp} $M$ for a small $\varepsilon$; 
% \item The policy $\pi^\dagger$ maximizes the total adversary rewards with respect to the reward function $r_1(\cdot)$.
\end{itemize}
\end{definition}

The first condition bounds how much the attacker can perturb the system dynamics at any time. If $d$ is small, then the perturbed dynamics is very close to the nominal dynamics, and thereby the system's trajectory distribution will appear normal and making it hard for the system to detect ongoing attack    during execution with limited data and inherent estimation noise.  The second condition is to ensure the policy $\pi^\dagger$ past the testing phase, as its performance is close to the optimal policy the user can get.

\section{Backdoor Attack Planning in \ac{mdp}s: Methods}
\label{sec:main-results}

First, we show that the first stealthy constraint can be satisfied by choosing the $K$ adversarially perturbed transition functions close to the original transition function $P_0$. The stealthiness of the trigger can be enforced by restricting the adversarial perturbations applied to the system dynamics.
% This property is formalized in the following lemma, whose proof is omitted as it is straightforward. %  are close, which implies that any mixing transition formed from these transitions remains close to the original transition. This property is formalized in the following lemma.

% Notation: Given $\vec{w} \in \dist(\calK)$, 
% $ 
%  P^{\vec{w}} (s'|s,a) = \sum_{k\in \calK} P_k(s'|s,a)\vec{w}_k.
%    $

\begin{assumption}
\label{assume:bounded-perturbation}
For any $k\in \{1,\ldots, K\}$, the perturbed transition function $P_k$ is $d$-close to the original transition function $P_0$. That is:
for any $(s,a,s')\in S\times A\times S$, for any $k \in \mathcal{K}$, it holds that 
\[ |{P_k (s'| s,a)-P_0(s'| s,a)}|\le d.
\]
\end{assumption}

 % \begin{lemma}
 % Assuming 
 % For any $s,a \in S\times A$, 
 %     assume $\norm{P_i(\cdot|s,a)- P_j(\cdot|s,a)} \le d$ for any $i, j\in \calK$, then $\norm{P^{\vec{w}}(\cdot|s,a)- P_i(\cdot|s,a)}\le d$ for any $i \in \calK$ and any $\vec{w}\in \dist(\mathcal{K})$.
 % \end{lemma}

 % \begin{proof}  
 % \begin{align*}
 %         &\norm{P^{\vec{w}}(\cdot|s,a)- P_i(\cdot|s,a)} \\ = &\norm{\sum_{k\in \calK} P_k(\cdot|s,a)\vec{w}_k- P_i(\cdot|s,a)} \\ = & \norm{\sum_{k\in \calK} P_k(\cdot|s,a)\vec{w}_k- \sum_{k\in \calK} P_i(\cdot|s,a) \vec{w}_k} \\ \leq & \sum_{k\in \calK} \vec{w}_k \norm{P_k(\cdot|s,a) - P_i(\cdot|s,a)} = \sum_{k\in \calK} \vec{w}_k d   =  d.
 %     \end{align*}     
 % \end{proof}
With limited data, the hypothesis test cannot reject the null hypothesis that the distribution is nominal when $d$ is small. Under the above assumption, the optimal backdoor strategy planning problem can be formulated as a constrained optimization problem:
\begin{equation}
	\label{eq:opt_backdoor_strategy}
	\begin{aligned}
		\optmax_{\pi^\dagger,\kappa}  \quad &  V_1 (
			M^{\pi^\dagger,\kappa})  \\
			\optst. \quad &   
			 V_0( M^{{\pi^\dagger}})  \geq (1-\varepsilon) 
 V_0^\ast(M),\\
		\end{aligned}    
	\end{equation} where $V_0^\ast(M)$ is the optimal value in the original \ac{mdp} and $V_0(M^{\pi^\dagger})$ is the   value of the backdoor policy $\pi^\dagger$ in the original \ac{mdp}.

% \section{Joint design of backdoor policy and trigger}

% \subsection{Finite-Memory Triggers}

% First, we define a general class of triggers.
%  \paragraph*{Trigger policy space}
%  A Markovian  trigger, $\kappa: S\rightarrow \calP $    maps a state  $s$  in the perturbed system to a transition function $P'$ in $\calP$. A finite-state trigger $\kappa$ is a tuple $\kappa = \langle Q, S\timesA\times S  ,\calP , q_0, \delta, \varkappa  \rangle $
%  where $Q$ is a finite set of states, $  S\times A\times S $ is a finite set of inputs, $\calP $  is a set of outputs, which can be infinite; $q_0$ is the initial state; $\delta: Q\times (S\times A\times S)\rightarrow Q$ is a deterministic transition function; and $\varkappa: Q \rightarrow \calP$
%  is the output function that maps $Q$ to a transition function in the set $\calP$. It can be shown that a Markov trigger is a special case of finite-state trigger.

% The structure of a finite state trigger $\kappa$ includes the set $Q$ of states and the transition function $\delta$. By defining the structure, we can restrict the search space of a backdoor strategy within a subset of possible strategies. 

We present a planning problem to solve the optimal backdoor strategy within this trigger strategy space and Markov control policy space.  Our method starts with constructing a two-player Markov game  to model the interaction between the agent training the backdoor policy  (referred as P0) and the agent learning how to triggers the backdoor attack (referred as P1).% }{Our method starts with constructing an \ac{mdp} with augmented state space and action space to model the interaction between victim (referred as P0) and adversary (referred as P1).}  

\begin{definition}[Augmented Constrained Markov Game]
Give a finite-memory observation-based trigger $\kappa= \langle Q, \mathcal{O}, q_0, \delta, \varkappa \rangle $, where $\varkappa$ is to be computed, and the original \ac{mdp} $M$, the following augmented two-player Markov game can be constructed:
\[
\calM = \langle S\times Q, A\times \calK, T, \hat{\mu}_0, R_1 \rangle 
\]
where  
\begin{itemize}
\item $  S\times Q $ is a set of states, which is the cartesian product of the state space in original \ac{mdp} and the memory state space for the trigger; 
\item $  A\times \calK$ is a set of actions, which is the cartesian product of the action space in original \ac{mdp} (P0's action set) and the set of indexes of the transition functions (P1's action set), including perturbed ones and the original one.
\item $T: (S \times Q) \times (A\times \calK) \rightarrow \dist (S\times Q)$ is the transition function, defined by 
\begin{multline*}
	T(  (s',q')| (s,q), (a,k)) =\\ \sum_{o\in \calO} E(o|s') P_k(s'|s,a) \cdot \mathbf{1}( \delta(q,o)=q'),
\end{multline*}
which is the probability of reaching state $s'$ from state $s$ given action $a$ in the transition function $P_k$, while the memory state is updated from $q$ to $q'$ given the observation emitted from that state $s'$.

\item $\hat \mu_0$ is the initial state distribution, defined by  
\[
\hat \mu_0(s,q) = \sum_{o\in \calO} \mu_0(s)E(o|s)\cdot \mathbf{1}(\delta(q_0,o)=q).
\]
  \item $R_1 : (S \times Q) \times (A \times \mathcal{K}) \rightarrow \reals$ is defined by 
$R_1((s, q), (a, k)) = r_1(s, a)$, 
which is the adversarial reward given the state-action pair $(s, a)$.
% \item $\obs_0: S\times (S\times Q)\rightarrow S$ is the observation function for player 0. At each state $(s_0,(s_1,q))$, player 0 can only observe $s_0$.
% \item $\obs_1: S\times (S\times Q)\rightarrow S$ is the observation function for player 1. At each state $(s_0,(s_1,q))$, player 0 can only observe $(s_1,q)$.
\end{itemize}
\end{definition}
 
%At each time step $t$, given  state   $(s^t,q^t)$, 
%the next  state distribution is jointly defined by P0's action $a^t$ and P1's decision of transition function $P_{(k^t)}$. % Thus the state sequence   $s^{0:T}$ is generated by a non-stationary transition functions $P_{(k^0)}, P_{(k^1)}, P_{(k^2)},\ldots, P_{(k^T)}$ given any action sequence. However, the augmented \ac{mdp} have a stationary transition.
In this Markov game, we consider two players P0 and P1 with restricted policy spaces: P0 uses a memoryless policy over MDP states, \(\pi_0: S \rightarrow \dist(A)\), while P1 uses a Markov policy over finite memory states, \(\pi_1: Q \rightarrow \dist(\mathcal{K})\). The joint policy $\bm{\pi}\coloneqq \pi_0 \otimes \pi_1$, is defined as:
\begin{equation}
	\label{eq:policy-aug}
\bm{\pi}((a,k) \mid (s,q)) = \pi_0(a \mid s) \cdot \pi_1(k \mid q).
\end{equation}
Let \(\Pi_0\) and \(\Pi_1\) be the policy sets of P0 and P1, respectively. The joint policy set is \(\bm{\Pi} = \Pi_0 \otimes \Pi_1\).

\begin{remark}
	Even though that $\pi_1$ does not show any explicit dependency on the   state $s $ in the trigger-perturbed system, it is observed that the automata state $q$ can contain information about $s $. For example, $q$ can be   the most recent $k$  observations.
\end{remark}
The joint policy \(\bm{\pi}\) induces a stochastic process \(\{S_t, Q_t, A_t, K_t\}_{t \ge 0}\) over the augmented Markov game. 
Given $
\calM^{\bm{\pi}}$,
the expected total discounted rewards under the reward function $R_1$ is % \sh{$i$ does not appear in the expression below.}
\[
V_1 (
\calM^{\bm{\pi}}) = \Expect_{\bm{\pi}} \left[\sum_{t=0}^\infty \gamma^t  R_1( (S_t,Q_t),(A_t, K_t))  \right].
\]

% A backdoor strategy $\langle  \pi, \kappa \rangle  $ can be recovered from the joint policy $\bm{\pi}  = (\pi_0 ,  \pi_1 ) \in \Pi_0\otimes \Pi_1$  
% such that 
% $ \pi  =\pi_0$  
% and  for each $q\in Q$,  Thus, we refer to the joint policy $\bm{\pi}$ as a backdoor strategy.

Finally, we show that the construction of the augmented constrained Markov game constitutes a valid reformulation of the attacker’s stealthy backdoor strategy, and we establish the correspondence between their solutions.

\begin{theorem}
\label{thm:equivalence}
	The solution of a stealthy backdoor strategy in \eqref{eq:opt_backdoor_strategy} is equivalent to solving the following constrained optimization problem:
\begin{equation}
	\label{eq:opt}
	\begin{aligned}
		\optmax_{\bm{\pi}  =\pi_0\otimes \pi_1 \in \bm{\Pi}}  \quad &  V_1 (
			\calM^{\bm{\pi}})  \\
			\optst \quad &  V_0  (
			M^{\pi_0})  \geq (1-\varepsilon) 
 V_0 ^\ast(M),\\
		\end{aligned}    
	\end{equation}
    where $V_0(M^{\pi_0})$ is the value of policy $\pi_0$ in the original MDP $M$ for the original reward $r$ and $V_0^\ast(M) = \max_{\pi_0}V_0 (M^{\pi_0})$ is the optimal value. 

    Let $(\pi_0 , \pi_1 )$ be the solution to \eqref{eq:opt}. A stealthy backdoor strategy is constructed such that:
    \begin{itemize}
        \item The Markov backdoor policy $\pi^\dagger \coloneqq \pi _0$;
        \item The action mapping  $\varkappa: Q\rightarrow \dist(\calK)$ in the finite-memory trigger $\kappa$  is defined such that $\varkappa \coloneqq \pi_1 $.
    \end{itemize} 
\end{theorem}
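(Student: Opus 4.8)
The plan is to establish a value- and constraint-preserving bijection between the feasible set of the original problem \eqref{eq:opt_backdoor_strategy} and that of the augmented problem \eqref{eq:opt}. Since the trigger skeleton $\langle Q, \mathcal{O}, q_0, \delta\rangle$ is fixed and only $\varkappa$ is free, a stealthy backdoor strategy is fully determined by the pair $(\pi^\dagger, \varkappa)$, while a feasible point of \eqref{eq:opt} is determined by $(\pi_0, \pi_1) \in \Pi_0 \otimes \Pi_1$. I would fix the natural identification $\pi_0 = \pi^\dagger$ and $\pi_1 = \varkappa$ and then argue that (i) the two objectives coincide, $V_1(M^{\pi^\dagger,\kappa}) = V_1(\calM^{\bm{\pi}})$, and (ii) the two constraints are literally identical.

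The core of the argument is (i), which I would prove by showing that the augmented-game process $\{(S_t, Q_t), (A_t, K_t)\}_{t\ge 0}$ under $\bm{\pi} = \pi_0 \otimes \pi_1$ has exactly the same law as the process $\{S_t, A_t, O_t\}_{t\ge 0}$ induced by the backdoor strategy, once the latter is augmented with the memory track $Q_t = \delta(q_0, O_{0:t})$ and the mode variable $K_t \sim \varkappa(\cdot \mid Q_t)$. I would verify this by induction on $t$. The base case checks that $\hat\mu_0$ reproduces the law of $(S_0, Q_0)$: sampling $S_0 \sim \mu_0$, $O_0 \sim E(\cdot\mid S_0)$, and setting $Q_0 = \delta(q_0, O_0)$ yields precisely $\hat\mu_0(s,q) = \sum_{o} \mu_0(s) E(o\mid s)\, \mathbf{1}(\delta(q_0,o)=q)$. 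For the inductive step I would confirm the one-step kernel matches: conditioning on $(S_t,Q_t)=(s,q)$, the backdoor strategy draws $A_t \sim \pi^\dagger(\cdot\mid s)$, $K_t \sim \varkappa(\cdot\mid q)$, $S_{t+1} \sim P_{K_t}(\cdot\mid s, A_t)$, $O_{t+1}\sim E(\cdot\mid S_{t+1})$, and $Q_{t+1}=\delta(q, O_{t+1})$; marginalizing $O_{t+1}$ then reproduces exactly $T((s',q')\mid(s,q),(a,k))$, while the conditional independence of $A_t$ and $K_t$ given $(s,q)$ reproduces the product law $\pi_0(a\mid s)\pi_1(k\mid q)$ of \eqref{eq:policy-aug}. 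Because $R_1((s,q),(a,k)) = r_1(s,a)$ depends only on $(s,a)$, the per-step discounted rewards agree sample-path-wise under the matched law, so the discounted sums have equal expectation and $V_1(M^{\pi^\dagger,\kappa}) = V_1(\calM^{\bm{\pi}})$.

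For (ii), the original constraint $V_0(M^{\pi^\dagger}) \ge (1-\varepsilon)V_0^\ast(M)$ and the augmented constraint $V_0(M^{\pi_0}) \ge (1-\varepsilon)V_0^\ast(M)$ are both evaluated in the \emph{unperturbed} MDP $M$ (transition $P_0$, no trigger) with the defender reward $r$; under the identification $\pi^\dagger = \pi_0$ they are the same inequality, so feasibility is preserved in both directions. Combining (i) and (ii), the map $(\pi^\dagger, \varkappa) \mapsto (\pi_0, \pi_1)$ is a feasibility-preserving bijection that preserves the objective value, hence the two problems share the same optimal value and their optimizers correspond as stated, so $\pi^\dagger = \pi_0$ and $\varkappa = \pi_1$ recover an optimal stealthy backdoor strategy. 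I would also remark that Assumption~\ref{assume:bounded-perturbation} discharges the first stealthiness condition of Definition~\ref{def:stealthy-backdoor}: the realized transition $P^{(t)} = \sum_{k}\varkappa(k\mid Q_t)P_k$ is a convex combination of kernels each satisfying $\abs{P_k(s'\mid s,a)-P_0(s'\mid s,a)}\le d$, so $\abs{P^{(t)}(s'\mid s,a)-P_0(s'\mid s,a)}\le d$ as well, and every feasible strategy in \eqref{eq:opt} is automatically stealthy.

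The step I expect to be the main obstacle is the inductive distributional-equivalence argument, specifically getting the observation/memory timing right: the memory $Q_{t+1}$ is driven by the observation emitted from the \emph{successor} state $S_{t+1}$, which is exactly what the indicator $\mathbf{1}(\delta(q,o)=q')$ inside $T$ and the convention $Q_t = \delta(q_0, O_{0:t})$ encode. Any off-by-one mismatch in when the observation is read, or failure to factor the joint action into the product $\pi_0(a\mid s)\pi_1(k\mid q)$, would break the equivalence, so the bulk of the care goes into pinning down these conditional dependencies.
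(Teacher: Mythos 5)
Your proposal is correct, and it reaches the same conclusion as the paper by a slightly different route. The paper's proof works backward: it unrolls the Bellman equation for $V_1(x,\calM^{\bm{\pi}})$ at an augmented state $x=(s,q)$, factors the joint policy as $\pi_0(a\mid s)\pi_1(k\mid q)$, regroups the sum over $k$ into the mixture kernel $\sum_k \varkappa(k\mid q)P_k(s'\mid s,a)$, and concludes that $V_1(\cdot,\calM^{\bm{\pi}})$ and $V_1(\cdot,M^{\pi^\dagger,\kappa})$ satisfy the same Bellman fixed-point equation, hence coincide. You instead argue forward, matching the law of the augmented process $\{(S_t,Q_t),(A_t,K_t)\}$ to that of the backdoor-strategy process by induction on $t$, starting from $\hat\mu_0$ and verifying the one-step kernel; equality of the discounted returns then follows from equality of the path distributions. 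The two arguments hinge on exactly the same one-step identity (marginalizing the emitted observation inside $T$ and exploiting the product form of $\bm{\pi}$), so the mathematical content is the same; yours is marginally more self-contained in that it does not need to invoke uniqueness of the Bellman fixed point, and it explicitly checks the initial-distribution matching, which the paper's Bellman computation leaves implicit. Two things you include that the paper's proof omits are worth keeping: the explicit observation that the constraint in \eqref{eq:opt} is literally the same inequality as in \eqref{eq:opt_backdoor_strategy} under the identification $\pi_0=\pi^\dagger$, and the convexity argument showing that the realized kernel $P^{(t)}=\sum_k\varkappa(k\mid Q_t)P_k$ inherits the $d$-closeness bound of Assumption~\ref{assume:bounded-perturbation}, which discharges the first stealthiness condition of Definition~\ref{def:stealthy-backdoor}; the paper asserts this earlier in the text but never proves it.
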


\section{Computing the backdoor strategy using a blackbox simulator}
\label{sec:algorithm}
The two-player Markov game framework enables the use of the solutions of constrained  games  to solve for the backdoor strategy, which comprises of policies \(\pi_0\) and \(\pi_1\) for P0 and P1, respectively. By this co-design, the backdoor policy  \(\pi_0\) crafted by the attacker is made to be exploitable by the trigger \(\pi_1\) at runtime to maximize the attacker’s return.

So far, our analysis and formulation for solving the backdoor strategy require the model of the original \ac{mdp}.
However, next we show that the attacker does not need access to the system dynamics. As long as the attacker has access to a blackbox simulator of the original system and its observation function, the attacker can still solve for an optimal attack strategy in \eqref{eq:opt}.

We consider parametric policy space for both backdoor policy $\pi_0$ and trigger policy $\pi_1$, which are parametrized by vectors $\theta_0$ and $\theta_1$, respectively. Using the attacker's return as the shared objective between P0 and P1, the existing algorithm in  \cite{jordanIndependentLearningConstrained2024} can be applied to solve a constrained Nash equilibrium of the constrained Markov game.  These methods are based on the recent progress  \cite{boobStochasticFirstorderMethods2023,jiaFirstOrderMethodsNonsmooth2025} for solving nonconvex  optimization
problems with nonconvex functional constraints. 

% First, using the blackbox simulator, one can collect
%   a set of \( m \) trajectories \( \{\tau^{(i)}\}_{i=1}^m \), where each trajectory consists of elements \( (s_t, a_t, o_t, r_t) \) for \( t \in \{1, \dots, |\tau^{(i)}|\} \), the estimated policy gradient in a general \ac{mdp}, under a parametric policy class \( \{\pi_{\theta} \mid \theta \in \Theta \} \), is given respectively for state-based and observation-based policies as:

Algorithm~\ref{alg:switching} outlines the overall solution procedure. For clarity, let \(\theta_i^{(t)}\) denote the parameters of policy \(\pi_i^{(t)}\) at iteration \(t\). The quantity \(V_0(\theta_0^{(t)}, M)\) represents the total discounted reward of policy \(\pi_0^{(t)}\) evaluated in the original \ac{mdp} \(M\), while \(V_1(\theta_0^{(t)}, \theta_1^{(t)}, \mathcal{M})\) denotes the corresponding evaluation in the augmented Markov game \(\mathcal{M}\) with respect to the attacker’s reward \(R_1\). The algorithm finds a local optimum to the optimization problem~\eqref{eq:opt}, which has non-convex objective and constraint functions. 

\begin{algorithm}
    \caption{Computing a backdoor strategy $\pi_0, \pi_1$. }\label{alg:switching}
    \begin{algorithmic}
        \STATE Initialize $\theta_0^{(0)}$, $\theta_1^{(0)}$ (initial policy parameters), $\alpha_0$, $\beta_0$ (learning rates), $\varepsilon$ (sub-optimality threshold), $\alpha$ (stopping threshold).
        \FOR{$t=1$ \TO $T$}  % Required FOR syntax
        \STATE $b \gets V^\ast_0(M)(1 - \varepsilon)$ \COMMENT{$V^\ast_0(M)$ is obtained by applying RL to the blackbox simulator.}
          \STATE Sample $m$ trajectories $\{\tau^{(i)}\}_{i=1}^m$ with the simulator in augmented $\calM^{\bm{\pi}}$ based on $\pi_{\theta_0^{(t)}}$ and $\pi_{\theta_1^{(t)}}$;
          \STATE  Sample $m$ trajectories $\{\rho^{(i)}\}_{i=1}^m$ from  original $\ac{mdp}$ based on $\pi_{\theta_0^{(t)}}$.
            \IF{$V_0( \theta_0^{(t)}, M) < b $ \COMMENT{Evaluate with $\{\rho^{(i)}\}_{i=1}^m $.}}  
            \STATE $\theta_0^{(t+1)} = \theta_0^{(t)} + \alpha_t \hat{\nabla}_{\theta_0} V_0( \theta_0^{(t)}, M)$  \COMMENT{Approximate gradient using $\{\rho^{(i)}\}_{i=1}^m$}
               \STATE $\theta_1^{(t+1)}  = \theta_1^{(t)}$;
            \ELSE 
            \STATE $\theta_0^{(t+1)} = \theta_0^{(t)} + \alpha_t \hat{\nabla}_{\theta_0} V_1(  \theta_0^{(t)}, \theta_1^{(t)} \calM )$
             \STATE $\theta_{1}^{(t+1)} =  \theta_{1}^{(t)} + \beta_t \hat{\nabla}_{\theta_1} V_1( \theta_0^{(t)}, \theta_1^{(t)}  , \calM )$, \COMMENT{Approximate gradient using $\{\tau^{(i)}\}_{i=1}^m$}
            \ENDIF
            \IF{$ \|\theta_0^{(t+1)} - \theta_0^{(t)}\| \leq \alpha$ and $\|\theta_1^{(t+1)} - \theta_1^{(t)}\| \leq \alpha $}
            % \COMMENT{stopping criteria.}
            \STATE \textbf{return} ($\theta_0^{(t)}, \theta_0^{(t)}$)
            \ENDIF
        \ENDFOR  % Must match FOR capitalization
        \RETURN ($\theta_0^{(T)}, \theta_0^{(T)}$)
    \end{algorithmic}
\end{algorithm}

At each iteration, the algorithm samples \(m\) trajectories from a black-box simulator of the system and generates corresponding observations using the emission function, yielding \(m\) trajectories of observations. It then estimates the total return of \(\pi_0^{(t)}\) from these samples and checks whether the constraint in ~\eqref{eq:opt} is satisfied. The update strategy is determined by whether the constraint is satisfied:
\begin{itemize}
    \item 
   If the constraint is satisfied, both \(\theta_0^{(t)}\) and \(\theta_1^{(t)}\) are updated via a policy gradient step in the augmented game \(\mathcal{M}\), optimizing the attacker's objective.
\item If the constraint is violated, only \(\theta_0^{(t)}\) is updated using a policy gradient step in the original MDP \(M\) to improve performance with respect to the original reward, while \(\theta_1^{(t)}\) remains fixed.
\end{itemize}

This alternating (or switching) policy gradient process continues until convergence or a maximum number of iterations is reached.

The policy gradient update is based on the REINFORCE estimator:
\begin{equation}
    \hat{\nabla}_{\theta} V(\theta) = \frac{1}{m} \sum_{i=1}^m
    \nabla_\theta \log\left(
      \prod_{t=1}^{|\tau^{(i)}|} \pi_\theta\left(a_t^{(i)} \mid o_t^{(i)}\right)
    \right) R\left(\tau^{(i)}\right),
\end{equation}
where $R\left(\tau^{(i)}\right)$ is the total return with the sampled trajectory  $\tau^{(i)}$.

Note that \(\hat{\nabla}_{\theta_0} V_1(\theta_0^{(t)}, \theta_1^{(t)}, \mathcal{M})\) denotes the partial derivative of \(V_1\) with respect to \(\theta_0\), and similarly, \(\hat{\nabla}_{\theta_1} V_1(\theta_0^{(t)}, \theta_1^{(t)}, \mathcal{M})\) is the partial derivative with respect to \(\theta_1\).

\section{Experimental Results}
 \label{sec:experiment}

We consider a robot motion planning problem in a stochastic Gridworld shown in Fig. ~\ref{fig:small-Gridworld}. The robot can move in four compass directions. Given an action, say, ``N'', the robot enters the intended cell with $1 - 2\alpha$ probability and enters the neighboring cells, which are west and east cells with $\alpha$ probability. 
The environment contains two types of targets. Reaching a high-value target (marked by a red triangular flag) yields a reward of 20, while reaching a low-value target (marked by a yellow rectangular flag) yields a reward of 2. Additionally, the environment includes traps, which incur a penalty of $–10$ when reached. When a target or a trap is reached, the robot stays in the same cell without any future reward. The agent receives a reward of 0 at all other states. The robot's objective is to maximize the total discounted rewards, under the discounting factor $\gamma =0.99$.

\begin{figure}[H]
    \centering \includegraphics[width=0.6\linewidth]{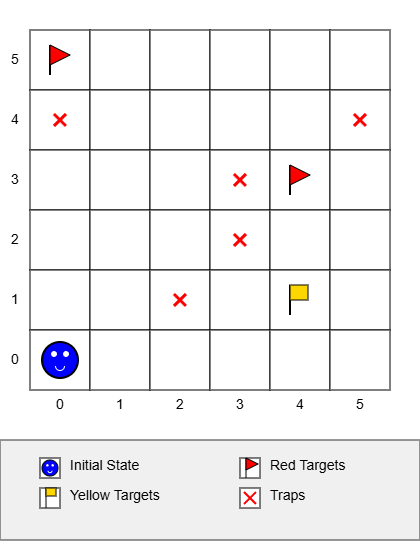}
    \caption{\small A $6 \times 6$ Stochastic Gridworld}
    \vspace{-2ex}
    \label{fig:small-Gridworld}
\end{figure}

The original, attack-free dynamics is characterized by the stochasticity parameter $\alpha=0.1$. % According to Theorem~\ref{thm:equivalence}, if we assume that the attacker can perturb the system using any stochastic parameter with $a$ and $b$, the resulting mixed transition dynamics can be represented by a transition function with $\alpha \in [a,b]$. 
In this experiment, we consider the case where the attacker can perturb the system using any stochasticity parameter in $\{0, 0.3\}$, which corresponds to $d = 0.2$ in Assumption~\ref{assume:bounded-perturbation}. For the P1's action set $\mathcal{K}$, we construct two transition functions $P_1, P_2$ with stochasticity parameters $0$ and $0.3$, respectively.

The attacker’s partial observation is defined so that, at each state $s$, the attacker observes the state with probability $p_{obs} = 0.8$, and receives an empty observation with probability $1 - p_{obs}$. We consider a zero-sum case where $r_1(s,a) = - r(s,a)$ for $(s,a)\in S\times A$. 

A trigger policy with a finite-memory of size $m$ is used. Given an integer $m\ge 1$, the state set $Q=\{\mathcal{O}^{\le m}\}$ are the set of observation sequences with length $\le m$.  For each $q\in Q$, $\delta(q, o) = q'$ is defined such that $q' = \mathsf{suffix}^{= m}(q\cdot o)$ \footnote{$\mathsf{suffix}^{= m}(w)$ is the last $m$ symbols of string $w$ if $|w|\ge m$ or $w$ itself otherwise.} is the last  (up to) $ m$    states. The   probabilistic output function is parameterized as, for integer $1\le k\le K $,
$     \varkappa_\theta(k|q) =  \frac{\exp(\theta_{q,k})}{\sum_{ k'}\exp(\theta_{q, k'})},
$ where  $\theta \in \reals^{
|Q \times K|}$ is the trigger parameter vector.  With the transition function defined, the trigger with $\theta_1$- parameterized output $\varkappa$ is the trigger policy $\pi_1$. The Markov policy $\pi_0$ also use a softmax parameterization, \ie, for any $(s,a)\in S\times A$, $\pi_{\theta }(a|s) =  \frac{\exp(\theta_{s,a})}{\sum_{ a'}\exp(\theta_{s, a'})}$. The policy with parameter $\theta_0$ is the backdoor policy $\pi_0  $, which is also $\pi^\dagger$.

We consider three performance metrics: \begin{inparaenum}[1)]
\item $V_0(\theta_0, M)$, which is the value of the  backdoor policy $\pi^\dagger$ in the attack-free case, evaluated with the robot's reward function $r$ and the original \ac{mdp}.
\item $V_0(\theta_0, \theta_1, \calM)$, which is the value of the  backdoor policy $\pi^\dagger$ when the trigger is active, evaluated with the robot's reward function $r$.
\item $V_1(\theta_0, \theta_1, \calM)$, which is the value of the backdoor policy $\pi^\dagger$ 
when the trigger is active, evaluated with the attacker's reward function $r_1$.  
\end{inparaenum}
In the zero-sum case, $V_0(\theta_0, \theta_1, \calM) = -V_1(\theta_0, \theta_1, \calM)$; therefore, we omit the plot of $V_1(\theta_0, \theta_1, \calM)$ as it is redundant. \footnote{The code is available at \url{https://github.com/LeahWeii/mdp-backdoor-synthesis-clean}}

We set the sub-optimality parameter for the backdoor policy to $\varepsilon = 0.2$, Fig.~\ref{fig:convergence-gw} presents convergence trends of the $V_0(\theta^{(t)}_0, \mathcal{M})$, $V_0(\theta_0, \theta_1, \mathcal{M})$ over the iterations of Algorithm~\ref{alg:switching}, where the algorithm converges to a local optimum. We observe that when the system is not under attack, the backdoor policy achieves a value of $13.49$. While when the trigger is active, its performance degrades to only $1.27$.

To illustrate how the near-optimality parameter \( \varepsilon \) in the constraint \( V_0(\theta_0, M) \ge (1 - \varepsilon) V_0^\ast(M) \) affects the attacker's performance,
 Fig.~\ref{fig:convergence-gw-epsilon} plots the converged values of $V_0(\theta^{\ast}_0, \mathcal{M})$, $V_0(\theta^\ast_0, \theta^\ast_1, \mathcal{M})$ given different values of $\varepsilon$ under the same P1's action set $\mathcal{K}$. When $\varepsilon=0$, the backdoor policy $\pi^\dagger$ achieves the optimal performance without trigger, and when the trigger is activated the performance degrades to $7.67$ . A larger $\varepsilon$ leads to improved performance, as it expands the feasible space of the policy $\pi^\dagger$ in \eqref{eq:opt_backdoor_strategy}.
 \begin{figure}[H]
    \centering
    \includegraphics[width=0.75\columnwidth]{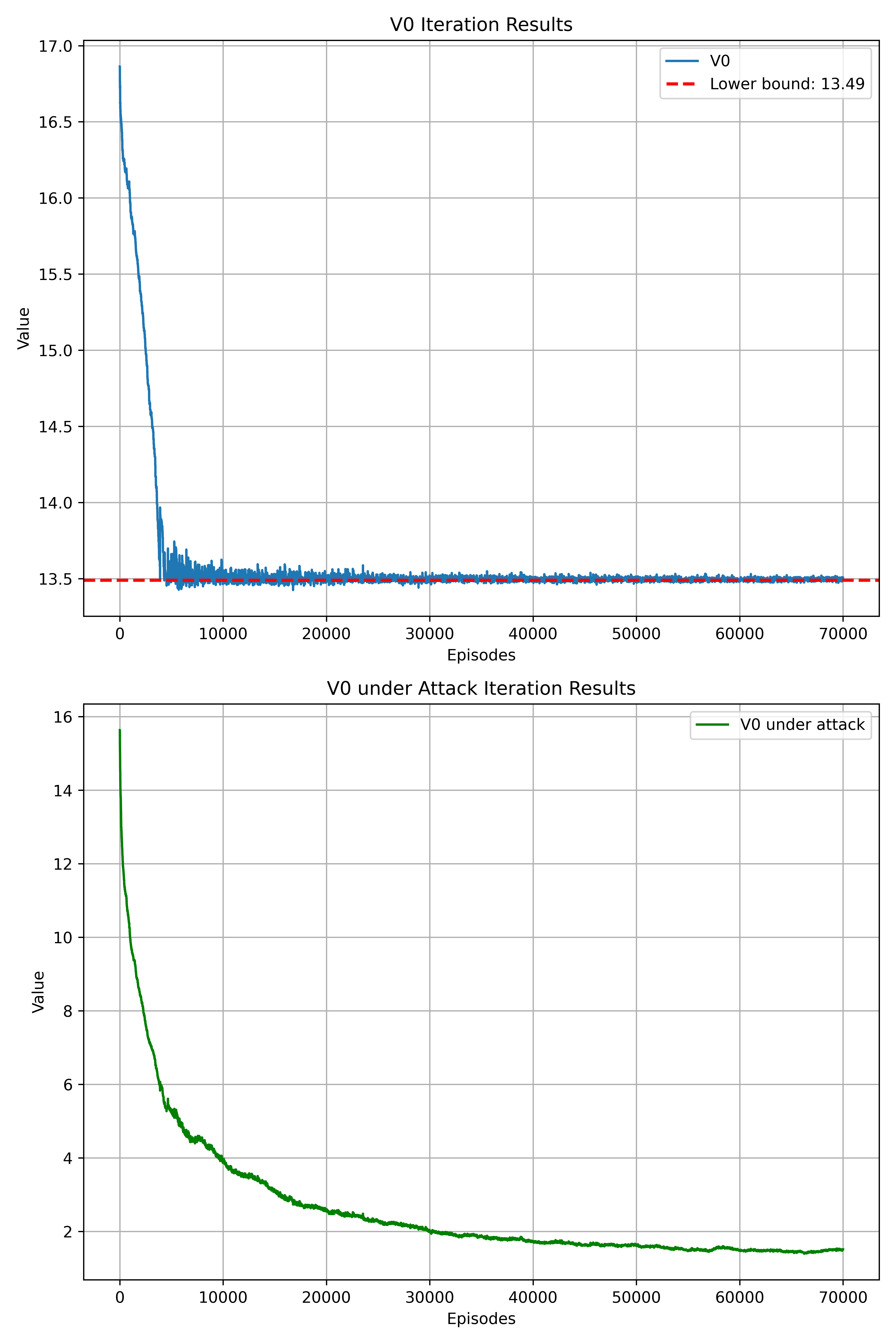}
    \caption{The evaluations of policies over iterations: $V_0(\theta_0^t, M)$ and $V_0(\theta_0^t, \theta_1^t, \mathcal{M})$ in the zero-sum case.}
    \label{fig:convergence-gw}
\end{figure}

To illustrate how the P1's action set $\mathcal{K}$, which is influenced by the rectangular uncertainty parameter $d$, affects the attacker's performance, Fig.~\ref{fig:convergence-gw-delta} presents the values of $V_0(\theta^\ast_0, M)$ and $V_0(\theta^\ast_0, \theta^\ast_1, \mathcal{M})$ under different perturbed stochasticity parameters $\delta$.
In this set of experiments, we allow the attacker to perturb the system using any stochasticity parameters in $\{ \max(\alpha-\delta, 0), \alpha+\delta\}$ for $d \in \{0, 0.05,0.1,0.2, 0.3\}$ under the same sub-optimality parameter $\varepsilon=0.2$.

 \begin{figure*}[ht]
    \centering
    \begin{subfigure}[b]{0.4\textwidth}
        \centering
        \includegraphics[width=\textwidth]{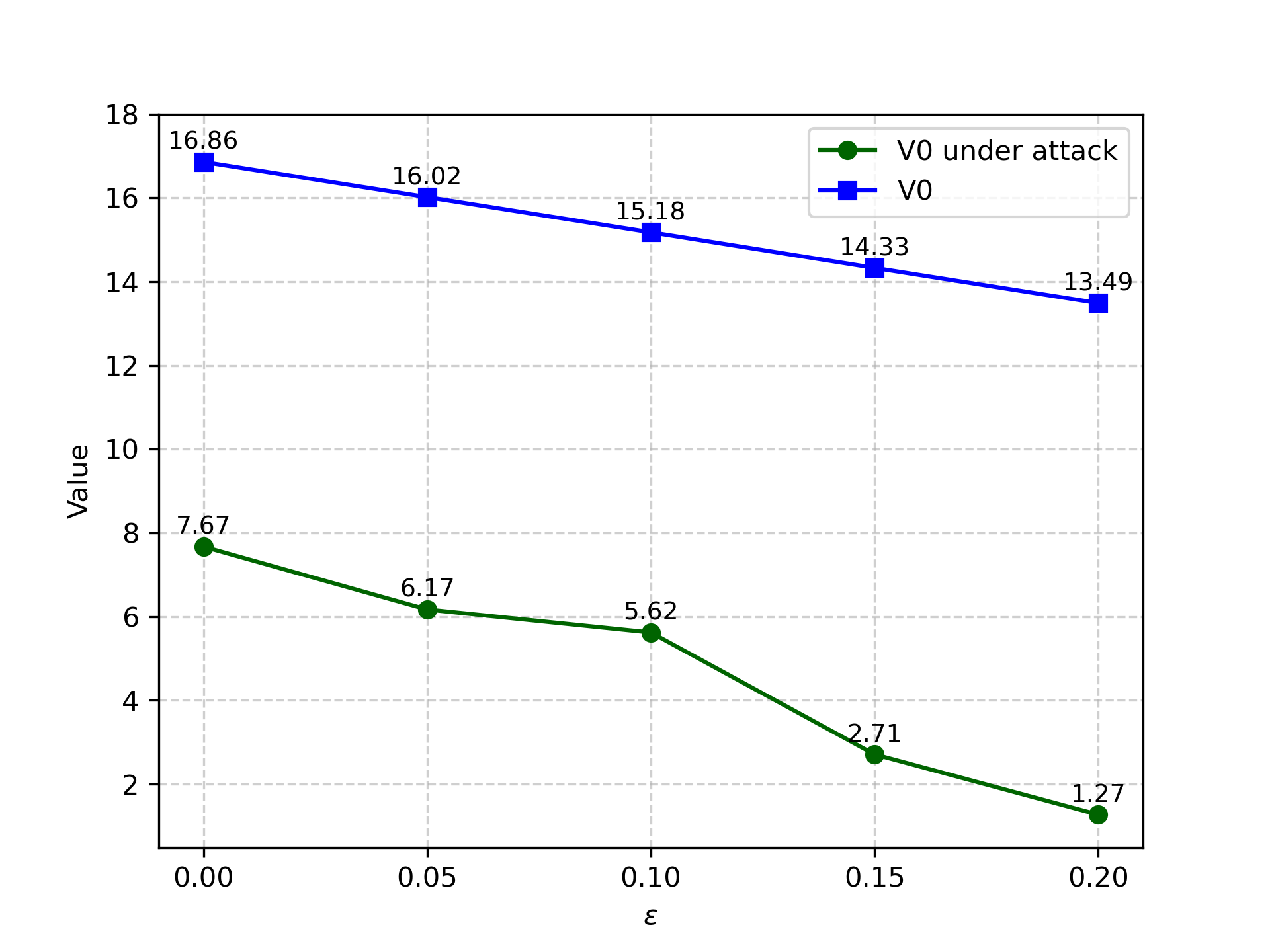}
        \caption{Outputs of $V_0(\theta_0^\ast, M)$ and $V_0(\theta_0^\ast, \theta_1^\ast, \mathcal{M})$ under different values of $\varepsilon$.}
        \label{fig:convergence-gw-epsilon}
    \end{subfigure}
    \hfill
    \begin{subfigure}[b]{0.4\textwidth}
        \centering
        \includegraphics[width=\textwidth]{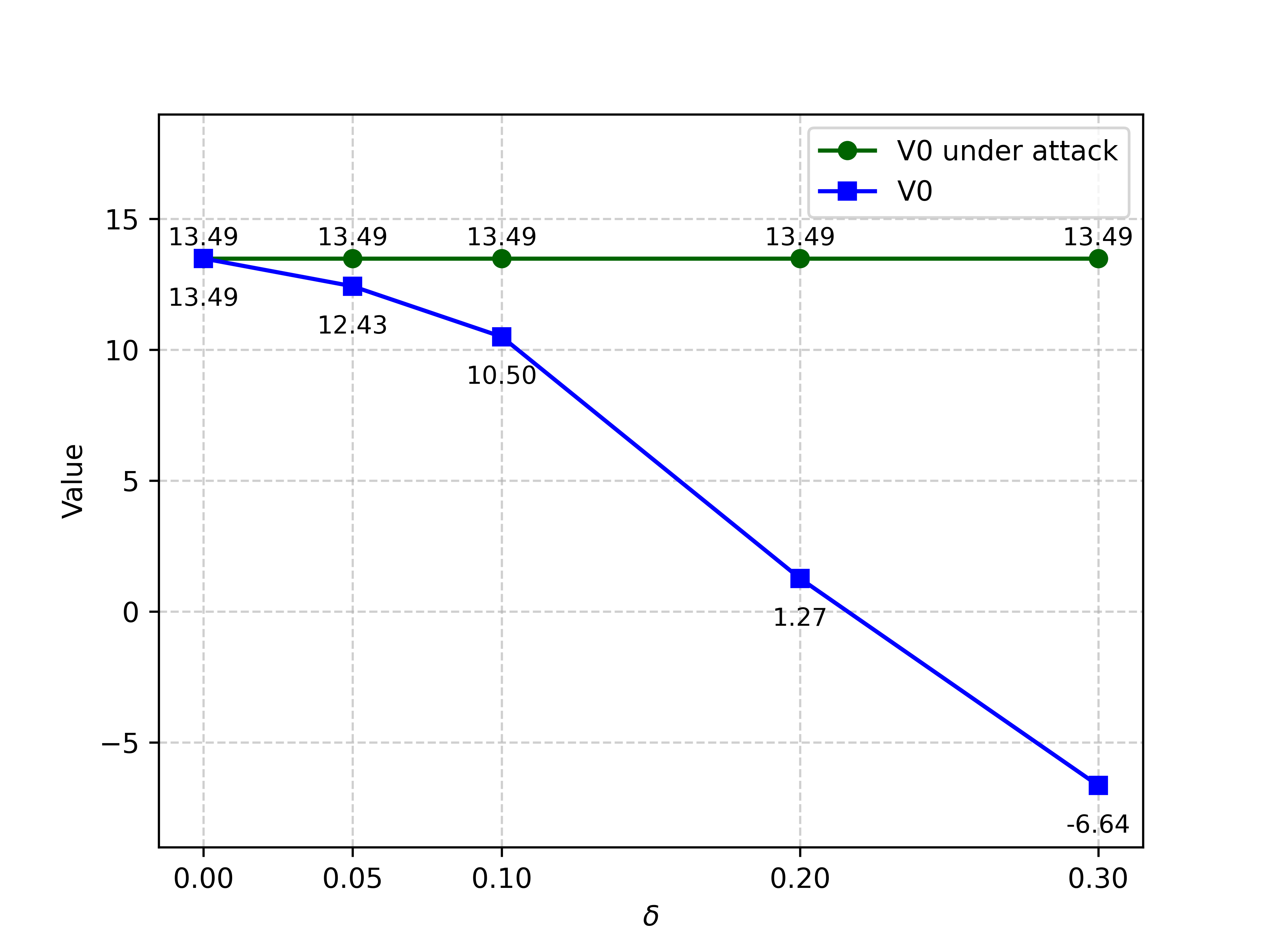}
        \caption{Outputs of $V_0(\theta_0^\ast, M)$ and $V_0(\theta_0^\ast, \theta_1^\ast, \mathcal{M})$ under different values of $\delta$.}
        \label{fig:convergence-gw-delta}
    \end{subfigure}
    \caption{Comparison of attack performances under different parameters $\varepsilon$ and $\delta$.}
    \label{fig:convergence-gw-both}
\end{figure*}

We also consider another general case, where the reward of the attacker and victim is not zero-sum. The victim’s reward function remains unchanged. The attacker’s reward is defined as $-2$ for reaching a red triangular flag, $20$ for reaching a yellow rectangular flag, and $10$ for reaching a trap.Fig.~\ref{fig:convergence-gw-not-zero-sum} shows the convergence of $V_0(\theta^{(t)}_0,M)$ and $V_0(\theta_0^{(t)},\theta_1^{(t)},\mathcal{M})$ across iterations of Algorithm~\ref{alg:switching} for $\varepsilon=0.2$. When the trigger is activated, the target agent’s performance falls to $-1.40$, while the attacker attains a value of $4.73$.  This demonstrates an effective backdoor: activation of the trigger produces a substantial performance drop compared with the case in which the backdoor policy is present but the trigger is not activated.

It is worth noting that the zero-sum reward design is not necessarily the most challenging competitive scenario. For instance, if   the trap reward for the attacker is increased to $100$ compared to the zero-sum case, the setup  becomes non–zero-sum. The attacker's gains are amplified when the victim reaches a trap state---as a result, the backdoor attack introduces a higher probability for the victim to reach a trap. In this environment, because the attacker assigns a higher reward to the yellow target, the coordinated backdoor and trigger policies   tend to increase the likelihood of trajectories near the yellow flag  and consequentially raise the likelihood of falling into nearby traps, at which the victim receives $-10$. Meanwhile, the attacker’s improved performance can be partly attributed to the higher reward design. %Furthermore, since our algorithm does not guarantee global optimality and may converge to a local optimum, this could also contribute to the observed performance gap.

\begin{figure}[H]
    \centering
    \includegraphics[width=0.7\columnwidth]{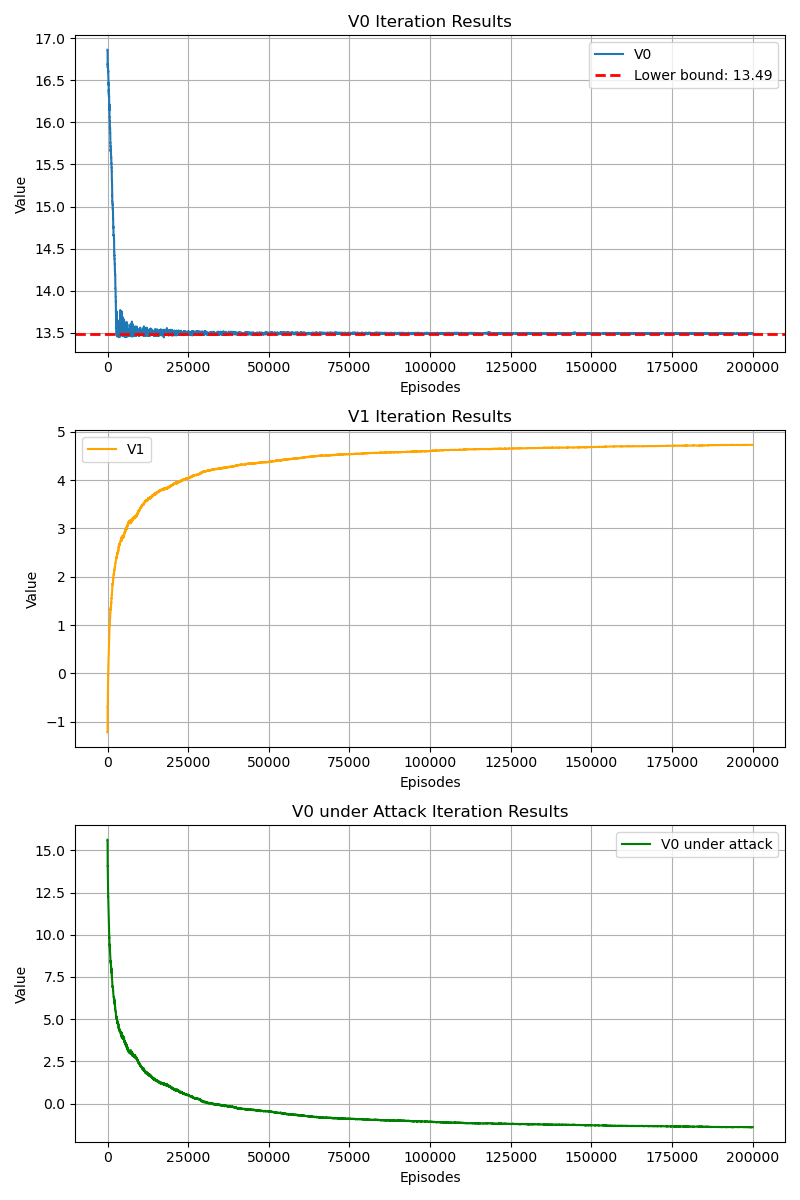}
    \caption{The evaluations of policies over iterations: $V_0(\theta_0^t, M)$ and $V_0(\theta_0^t, \theta_1^t, \mathcal{M})$ in the non-zero-sum case.}
    \label{fig:convergence-gw-not-zero-sum}
\end{figure}

\section{Conclusion}
This paper presents a method to design backdoor attack strategies targeted as probabilistic sequential decision-making systems. It shows the co-design of backdoor policy and trigger can be formulated as an augmented constrained Markov game with a shared constraint. Building on this game-theoretic modeling, it employs two-agent independent policy gradients with shared constraints to solve the backdoor strategy. We demonstrate the effectiveness of the co‑design on a  stochastic gridworld.

Future work can be extended in multiple directions. First, backdoor attacks in MDPs provide critical insights for designing backdoor attacks in model-free reinforcement learning. This is because the backdoor policy can be used to compute data poisoning attacks in offline RL.  
Second,  the adverse effect of a backdoor attack highly depends on the environment configuration. This observation could lead to insights for  developing security defenses against such attacks and to construct an attack-aware verification or   planning/reinforcement learning algorithms that are robust to adversarial perturbations. %. For example, making a policy more robust to perturbations may be effective against backdoor attacks.

\section*{Acknowledgment}
Research was sponsored by Army Research Office  under Grant
Number W911NF-22-1-0166, and NSF under award \#2144113. The views and conclusions contained in this document are those of the authors and
should not be interpreted as representing the official policies, either expressed or implied, of the Army Research
Office or the U.S. Government.

\addtolength{\textheight}{-12cm}   % This command serves to balance the column lengths
                                  % on the last page of the document manually. It shortens
                                  % the textheight of the last page by a suitable amount.
                                  % This command does not take effect until the next page
                                  % so it should come on the page before the last. Make
                                  % sure that you do not shorten the textheight too much.

%%%%%%%%%%%%%%%%%%%%%%%%%%%%%%%%%%%%%%%%%%%%%%%%%%%%%%%%%%%%%%%%%%%%%%%%%%%%%%%%

%%%%%%%%%%%%%%%%%%%%%%%%%%%%%%%%%%%%%%%%%%%%%%%%%%%%%%%%%%%%%%%%%%%%%%%%%%%%%%%%

%%%%%%%%%%%%%%%%%%%%%%%%%%%%%%%%%%%%%%%%%%%%%%%%%%%%%%%%%%%%%%%%%%%%%%%%%%%%%%%%

%%%%%%%%%%%%%%%%%%%%%%%%%%%%%%%%%%%%%%%%%%%%%%%%%%%%%%%%%%%%%%%%%%%%%%%%%%%%%%%%

\bibliographystyle{ieeetr}
\bibliography{refs}

\end{document}